\documentclass[11pt]{amsart}
\usepackage{comment}
\usepackage[utf8]{inputenc}
\usepackage[T1]{fontenc}
\DeclareMathAlphabet{\mathpzc}{OT1}{pzc}{m}{it}
\usepackage{geometry}
\usepackage[dvipsnames]{xcolor}

\usepackage{xstring}
\newcommand{\bef}[2][e]{%
    \IfStrEqCase{#1}{%
        {e}{\textcolor{Blue}{\textbf{*Beckett: #2*}}}%
        {c}{}%
    }
}
\newcommand{\nd}[2][e]{%
    \IfStrEqCase{#1}{%
        {e}{\textcolor{PineGreen}{\textbf{*Dr. Nathan: #2*}}}%
        {c}{}%
    }

}
\newcommand{\mhadd}[2][e]{%
    \IfStrEqCase{#1}{%
        {e}{\textcolor{Blue}{\textbf{#2}}}%
        {r}{#2}%
    }
}
\newcommand{\ndadd}[2][e]{%
    \IfStrEqCase{#1}{%
        {e}{\textcolor{PineGreen}{\textbf{#2}}}%
        {r}{#2}%
    }
}

\usepackage{amsfonts}
\usepackage{amssymb}
\usepackage{amsthm}
\usepackage{amsmath}
\usepackage{amscd}
\usepackage[shortlabels]{enumitem}
\usepackage{mathrsfs}
\usepackage{tikz}
\usetikzlibrary{calc,arrows,decorations.pathreplacing}
\usepackage{nicefrac, xfrac}
\usepackage{mathtools,xparse}
\usepackage{soul}
\usepackage{appendix}

\usepackage[pagebackref, colorlinks = true,
linkcolor = red,
urlcolor  = blue,
citecolor = blue,
anchorcolor = blue]{hyperref}

\theoremstyle{plain}

\newtheorem{theorem}{Theorem}[section]

\newtheorem{proposition}[theorem]{Proposition}

\newtheorem{question}[theorem]{Question}

\newtheorem{exercise}[theorem]{Exercise}

\theoremstyle{definition}

\newtheorem*{theorem*}{Theorem}

\renewcommand{\epsilon}{\varepsilon}

\newcommand{\vertiii}[1]{{\left\vert\kern-0.25ex\left\vert\kern-0.25ex\left\vert #1
		\right\vert\kern-0.25ex\right\vert\kern-0.25ex\right\vert}}

\DeclareSymbolFont{bbold}{U}{bbold}{m}{n}
\DeclareSymbolFontAlphabet{\mathbbold}{bbold}

\newcommand{\NN}{\ensuremath{\mathbb N}}

\newcommand{\ZZ}{\ensuremath{\mathbb Z}} 

\providecommand{\phantomsection}{}
\AtBeginDocument{\let\textlabel\label}
\makeatletter
\newcommand{\mylabel}[2]{\raisebox{.7\normalbaselineskip}{\phantomsection}(#1)%
	\def\@currentlabel{#1}\textlabel{#2}}
\makeatother

\makeatletter
\newcommand\xlabel[2][]{\phantomsection\def\@currentlabelname{#1}\label{#2}}
\makeatother

\ExplSyntaxOn
\NewDocumentCommand{\mathlist}{ O{,} m m }
 {
  \egreg_mathlist:nnn { #1 } { #2 } { #3 }
 }

\seq_new:N \l__egreg_mathlist_seq
\cs_new_protected:Npn \egreg_mathlist:nnn #1 #2 #3
 {
  \seq_set_split:Nnn \l__egreg_mathlist_seq { #1 } { #3 }
  \seq_use:Nnnn \l__egreg_mathlist_seq { #2 } { #2 } { #2 }
 }
\ExplSyntaxOff

\newcommand{\card}[2]{\textbf{#1} \textit{(#2)}}
\newcommand{\jcard}[1]{\textbf{#1}}

\allowdisplaybreaks

\numberwithin{equation}{section}
\title[]{Riftbound is Turing Complete}
\date{\today}

\author[N. Dalaklis]{Nathan Dalaklis} \thanks{Acknowledgments: The authors thank the local Norman, OK \textit{Riftbound} community for discussions about the game which led to this work, and also to the wider \textit{Riftbound} community for catching an error in an initial version of this work. This research received no external funding, and the bulk of the work was completed in the summer of 2026 during a research experience for undergrads designed by the first author. This work has not been edited or generated in any part by AI.}
\address[N. Dalaklis]{Department of Mathematics, University of Oklahoma, Norman, OK 73019, USA}
\email{\href{ndalaklis@ou.edu}{ndalaklis@ou.edu}}

\author[B. Fields]{Beckett Fields}
\address[B. Fields]{Department of Mathematics, University of Oklahoma, Norman, OK 73019, USA}
\email{\href{Beckett.E.Fields-1@ou.edu}{Beckett.E.Fields-1@ou.edu}}

\begin{document}

\begin{abstract}
    \textit{Riftbound: League of Legends Trading Card Game} is a trading card game about capturing and holding locations in a king-of-the-hill style contest. Originally released in China in August of 2025, and later released in the United States in October of 2025, the game has been well received for its depth and complexity. In this paper we demonstrate a facet of this complexity by providing sequences of valid game states which construct Universal Turing machines within the game. Each of these machines are constructed with tournament legal decks at the time of writing and strategies assigned are directed by the game state. We also show that given an appropriate board state the machine may be constructed and the computation may be performed in one game turn.
\end{abstract}
\maketitle

\section{Introduction} 
\textit{Riftbound:\;League of Legends Trading Card Game}, henceforth \textit{Riftbound}, is a trading card game developed by Riot Games and published by UVS Games and Shining Soul. Formally, the standard format of \textit{Riftbound} is a two-player zero-sum stochastic card game with imperfect information where decks are constructed by players from the published card-pool of just under $1000$ unique cards (at the time of writing) \cite{RDecks}. Other multiplayer formats also exist. In this way, \textit{Riftbound} is similar to \textit{Magic:\;The Gathering}, henceforth \textit{Magic}, which, to our knowledge, is the only trading card game known to be Turing complete prior to this work \cite{TCBlog,MagicCBH20}.

There are several multiplayer computer games that have mechanics built-in that readily allow for the implementation of Turing machines. Games like \textit{Minecraft} and \textit{Factorio} are such examples. As a result, the study of computational complexity in modern popular multiplayer games has focused on games where the existence of a construction is not obvious. Recently, the Turing completeness of \textit{Sid Meier's Civilization} was shown in \cite{CIV23}. In \cite{MagicPL24}, a programming language was embedded into Magic improving on the previously mentioned work on the game. In many descriptions of implementing Turing machines (TMs) and/or Universal Turing machines (UTMs) in modern multiplayer games we see that the implementation of the machine occurs over several turns and and in some cases requires an infinite turns assumption. In our case, once an appropriate board state is achieved, the machine can be constructed and run in a single turn of \textit{Riftbound}.

For trading card games in particular, \textit{Magic} has been investigated at length over the past decade (see again \cite{TCBlog,MagicCBH20,MagicPL24}). The description of the universal turing machine constructed in \textit{Magic} is run over multiple turns, and it enjoys the additional property that all actions of both players are forced. Although \textit{Riftbound} also has mechanics that force player action, we instead focus on the computational complexity of turn strategies and their interaction in the game of Riftbound. \footnote{For example, \card{Blind Fury}{OGN 025}, \card{Promising Future}{OGN 115}, and \card{Cull of The Weak}{OGN 209} force player action.} We prescribe a strategy whose implementation is directed by the game's board state and show that the resulting series of actions are equivalent to simulating computations on a given TM with a fixed number of states and alphabet symbols. As a result we recover that there exist strategies in \textit{Riftbound} for which the end state of a given turn in the strategy is undecidable.

\subsection{Main Theorems}
We call the machine constructed herein Jayce's Computer. A majority of the construction of the Turing machine and the computation on a given input occur solely on the constructor's sixth turn of the game in a kind of solitaire which sets our machine apart from the construction in \textit{Magic} and in other games where the machine takes several turns to implement. 

We also specify that for our setting we only consider game-state-directed strategies or strategies whose actions are only dependent on the overall state of the game. These strategies are natural for our context as they conform to how most players actually play trading card games. This restriction also removes trivially complex strategies from our analysis. For an example of one such excluded strategy, take a Turing machine input pairing $(M,w)$ and prescribe:
\begin{enumerate}
    \item Compute $M$ on $w$.
    \item If $M$ halts on $w$ concede
\end{enumerate}
The strategy herein is independent of the game state but deciding its outcome is equivalent to the halting problem. 

Given an arbitrary turing machine $M$ with $7$ states and $4$ symbols, Jayce's stratetgy will embed $M$ and will thus allow us to embed a member of class UTM$(7,4)$ which was shown to be non-empty in \cite[Section 6.]{RS96}. Since the computation of Jayce's Computer will occur on a single turn we also show that 

\begin{theorem}\label{MainTheorem}
    Whether or not the application of a game-state-directed strategy on a single turn of \textit{Riftbound} will result in a game win is undecidable. 
\end{theorem}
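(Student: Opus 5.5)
We argue by reduction from the halting problem for a fixed universal machine. By \cite[Section 6.]{RS96} there is a universal Turing machine $U$ with $7$ states and $4$ symbols. Its halting problem $H_U=\{w : U \text{ halts on } w\}$ is undecidable, since deciding it would decide the halting problem for all machines via the standard encoding. Our plan is to build a computable map $w \mapsto (S_w,\sigma)$. Here $S_w$ is a legal \textit{Riftbound} game state at the start of the constructor's sixth turn, and $\sigma$ is one fixed game-state-directed strategy, namely Jayce's strategy. The map should satisfy the following: applying $\sigma$ from $S_w$ on that single turn ends in a game win if and only if $U$ halts on $w$. A decision procedure for the statement of Theorem~\ref{MainTheorem} would then decide $H_U$, which is a contradiction.

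The construction proceeds in the following order.
\begin{enumerate}
\item Fix a tournament-legal deck and a board state reachable by turn six. In this state, in-game objects (units and their counters or might values, locations, and tokens) encode the current state $q \in \{q_1,\dots,q_7\}$, the tape symbol under the head from the $4$-letter alphabet, and the tape contents. We encode the tape as two counters or stacks on either side of the head, with the blank symbol handled by the default or empty value.
\item Describe $\sigma$ as a finite lookup table. It reads the visible (state, symbol) encoding on the board and plays a prescribed card or activation. Because $\sigma$ consults only the game state, it is game-state-directed in the sense of the paper. This is also why $\sigma$ is fixed independently of $w$: the transition table of $U$ is hard-wired into the list of cards and responses, and only $S_w$ depends on $w$.
\item Verify, transition by transition, that each of the $28$ entries $(q,a)\mapsto(q',a',D)$ of $U$ is realized by a finite sequence of legal actions. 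Each such sequence must rewrite the symbol, shift the head by transferring between the two tape stacks, and update the state marker. It must also return the board to an encoding of the successor configuration.
\item Show that the turn can continue indefinitely without passing. This requires a resource-neutral loop in which energy, cards and power are regenerated each step, so that an arbitrarily long computation fits in one turn.
\item Arrange that entering the halting state of $U$ triggers a scoring action that wins the game, such as conquering or holding enough points to reach the victory threshold. Conversely, no action taken while $U$ is running can score the winning point. A nonhalting run therefore continues forever within the turn and never wins.
\end{enumerate}

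The main obstacle will be steps 3 and 4: finding a concrete card combination that gives unbounded counters with both increment and decrement and zero-tests, and that loops within a single turn at no net resource cost. We would first look for a pair of cards that mutually re-ready or re-trigger each other, giving an unbounded loop. We would then attach counter modifications, such as buffs and might changes on two designated units, to the branches of that loop, and use a conditional trigger (for example an effect depending on a unit's might being $0$, or on a unit dying) as the zero-test. If a two-counter encoding proves easier than a direct tape encoding, we would instead simulate $U$ through a Minsky-machine translation, which keeps the reduction computable. We must also check that the opponent has no legal responses that could disrupt the loop, or else specify an opponent state with no available reactions. Once the simulation lemma is established, the undecidability follows immediately from the reduction above.
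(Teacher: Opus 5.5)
\paragraph{The gap.} Your reduction shell is the same one the paper uses: take a fixed $U\in\mathrm{UTM}(7,4)$ from \cite{RS96} and build a single turn that wins iff $U$ halts. That is the easy part. Everything that makes this a theorem about \emph{Riftbound} sits in your steps 1, 3, 4 and 5, and you leave those open as ``the main obstacle,'' replacing them with a search plan. As written, you exhibit no deck, no reachable turn-six board, no unbounded single-turn loop, no tape encoding and no transition gadget. The argument is therefore game-agnostic and proves nothing specific to this game.

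The paper's proof consists precisely of supplying these pieces:
\begin{itemize}
\item an explicit line of play (Subsection~\ref{JCSetup}) to a turn-six board with a loop netting $3$ energy and $4$ mind power per cycle. The loop uses empowered Platewyrm Eggs re-readied by the Jayce legend and by Heimerdinger, Inventor copying exhaust abilities, with Acceleration Gate redrawn by Questionable Tome, recycled by Garbage Grabber, and discounted by Applied Researchers;
\item a subroutine for redrawing arbitrary cards;
\item a tape of Mech tokens from Hextech Disc, positioned by Grim Resolve giving the $i$-th Mech $3i$ might together with the zig-zag order $<_m$; this plays the role of your two stacks;
\item symbols as buffed/empowered status, set and cleared with Arena Bar, Profiteer and Wallop;
\item the head as the number of Sprite tokens;
\item the machine state as the empowered status of three gears, flipped by Profiteer;
\item transitions assembled from these pieces in Section~\ref{Transitions}.
\end{itemize}

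\paragraph{Choices that would not work as stated.}
\begin{itemize}
\item Conquering or holding cannot be the win mechanism. You score at most once per battlefield per turn, and holding only happens in the beginning phase, so this cannot take you to $8$ points during one action phase. In the paper's line Jayce is still at $0$ on turn six. The paper instead needs a repeatable ``score $1$'' exhaust ability: Renata Glasc, Mastermind, reused through a repeatedly readied Heimerdinger (Section~\ref{halting}).
\item You do not need a game-mechanical zero test or a Minsky-machine detour. A game-state-directed strategy may read the board, as your own step 2 says. The paper branches exactly this way, for example on the parity of the Sprite count or on whether $q_l\neq q_j$.
\item Fixing $\sigma$ and letting the start-of-turn-six state $S_w$ carry $w$ requires unbounded construction on earlier turns that survives the turn boundary. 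The paper's machinery does not provide this: Grim Resolve lasts only ``this turn,'' Sprites are temporary, and the unbounded loop first comes online during turn six. This is why the paper fixes the turn-six state and has the strategy write $w$ onto the tape during the turn.
\end{itemize}
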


\subsection{Structure of the Paper}

We briefly recall preliminaries regarding Turing machines and our working definitions in Section \ref{Prelim}. We then detail the construction throughout Section \ref{JC} and its subsections. Subsections \ref{JS} and \ref{JCSetup} describe a route of valid game states one may take to enter a board state were construction of a TM with $7$ states and a $4$ symbol alphabet and its computation are possible. In Subsection \ref{CR&TS} we detail subroutines of the strategy necessary for continuing the turn and instantiating the components necessary for representing the TM. In Subsection \ref{Transitions} we detail the subroutines of the strategy used for implementing the transition function on Jayce's Computer and how we may encode halting. In Section \ref{strategy}, we explicitly state the strategy that proves Theorem \ref{MainTheorem}, we also comment on the complexity of said strategy here. In Section \ref{C&P} we briefly discuss the construction of other Turing machines in \textit{Riftbound}. Section \ref{Append} is an appendix containing a rough introduction to types of cards and concepts in \textit{Riftbound}. Section \ref{D&T} details the decklist used for Jayce's Computer in terms of where each card resides at the time construction of the machine is ready to begin.

\section{Preliminaries}\label{Prelim} A \textbf{Turing machine} is a tuple $M=(Q,\Sigma,\delta,q_0,q_{\text{accept}},q_{\text{reject}})$ where $Q$ is the set of states, $\Sigma$ is the set of tape symbols, $\delta: Q\times \Sigma \to Q\times \Gamma \times\{L,R\}$ is the transition function, and $q_0,q_{\text{accept}},q_{\text{reject}}\in Q$ are the start, accept and reject states respectively. For additional background on Turing machines we direct the reader to \cite[Chapters 3-5]{Sip13}. 
$\text{TM}(m,n) := \{M\::\: |Q| = m, |\Sigma| = n\}$ is the class of turing machines with $m$ states and an $n$ symbol alphabet. A Turing machine $M$ is called \textbf{universal} if it can simulate any other Turing machine. Such Turing machines were shown to exist by Turing in \cite{T37,T38}. The class UTM$(m,n)$ is the collection of universal turing machines with $m$ states and an $n$ symbol alphabet. For certain small values of $m$ and $n$, UTM$(m,n)$ was shown to be non-empty in \cite{RS96}. In an abuse of notation, henceforth we refer to a representative of this class by  UTM$(m,n)$ instead of naming one.

Let $X$ be a state space of a multiplayer turn-based game. A \textbf{game-state-directed strategy} is a function $f:X\to X$ whose iterates give an algorithm by which a player will progress the game-state on a given turn. A trading card game Turing machine is a machine-game-directed-strategy pair $(M,S)$ such that $S$ simulates $M$ in the game. In our setting of trading card game Turing machines (TCGTM), we say a TCGTM has \textbf{string-card complexity} $O(f(n))$ if the number of cards required to initialize any string of length $n$ on the tape is bounded above by $Nf(n)$ for some $N>1$. A \textbf{turn complexity} can be defined for multiplayer Turing machines (MPTM) that are deciders as well. That is, a decider MPTM has turn complexity $O(f(n))$ if for an input string $w$ of length $n$ the number of turns required to run the MPTM on $w$ is bounded above by $Nf(n)$ for some $N>1$. As we will see, Jayce's Computer will have turn complexity $O(1)$ when it is implemented with a decider. 

\section{Jayce's Computer}\label{JC}
In this construction we will embed an arbitrary $M\in\text{TM}(7,4)$. This will allow us to use UTM$(7,4)$ from \cite[Section 6.]{RS96} in our game-state-directed strategy for Jayce. Our first goal is to show that there exists a sequence of valid game states where one may construct the machine so there are several assumptions we may make to simplify Jayce's construction of the machine.
\subsection{Assumptions} \label{JS}
In the following construction we suppose that the opponent scores on a $(0,1,3,5,7,8)$ point tempo. We can increase the probability that opponent plays at this tempo by playing \card{Forgotten Monument}{SFD 209: Players can't score here until their third turn} as our battlefield. We also assume that the opponent does not alter Jayce's board state and does not score on Jayce's turn. Under this assumption, we may also assume that the opponent does not impede Jayce's ability to score, play spells, or move units. We address these concerns in more detail in Section \ref{C&P}. Without loss of generality we assume that Jayce goes first. In fact, to prove our main theorem, we need only demonstrate that there is a valid sequence of game states that implement a universal Turing machine in \textit{Riftbound} and that the computation on which happens on a single turn.
\subsection{Setup}\label{JCSetup} Here we detail one route through the deck to the game state where we can begin the construction of Jayce's Computer. This requires that Jayce draw the entirety of their deck without decking out and have the ability to generate an unbounded amount of energy and mind power. This process is inspired by \cite{JayceYT} however our deck is slightly different. We note that alternate routes are possible. We will first detail how to draw the entire deck and place the desired cards into the draw pile for initializing the resource loop. Then we describe the resource generation loop.
    \subsubsection{Jayce Turn 1} 
    
    \begin{enumerate}
    \item Play gear \card{Hextech Formula}{VEN 62: This enters exhausted./ Exhaust: Empower another gear.} for $2$ energy and pass.
    \end{enumerate}
    
    \subsubsection{Bob Turn 1} Under our assumptions Bob does not score and passes the turn. 
    \subsubsection{Jayce Turn 2 (Bob Score 0)} \label{JT2}
    \begin{enumerate}
        \item Play gear \card{Platewyrm Egg}{VEN 75: This enters exhausted./.../ Reaction: Exhaust: Add $1$ energy. if this is empowered, add $2$ energy} for $3$ energy.
        \item For $1$ energy activate legend \card{Jayce, Defender of Tomorrow}{VEN 149: Empower: 2 Energy 2 Power/ 1 Energy, Exhaust: Ready a gear. / Empowered, 1 Energy, Exhaust, Ready $2$ gear} ability targeting \jcard{Platewyrm Egg} to ready it. \label{JLA}
        \item Exhaust \jcard{Hextech Formula} targeting \jcard{Platewyrm Egg} to empower it. 
        \item Exhaust \jcard{Platewrym Egg} to add $2$ energy
        \item Play \jcard{Hextech Formula} for $2$ energy and pass the turn.
    \end{enumerate}
    \subsubsection{Bob Turn 2} Under our assumptions Bob scores a point and then passes the turn.
    \subsubsection{Jayce Turn 3 (Bob Score 1)} \label{JT3}
    \begin{enumerate} 
        \item Exhaust empowered \jcard{Platewrym Egg} to add $2$ energy. \label{JEPWE}
        \item Exhaust all runes and recycle a mind rune to add a mind power. Play unit \card{Jayce, Brilliant Inventor}{VEN 068: When you play me or the first time you play a non-token gear each turn, you may ready something besides me that's exhausted.} for $6$ energy and $1$ mind power to ready empowered \jcard{Platewrym Egg}. 
        \item Exhaust empowered \jcard{Platewrym Egg} to add $2$ more energy ($4$ energy remains) \label{JEPWEagain}
        \item Play gear \card{Sumpworks Map}{VEN 085: Reaction / Temporary / When an opponent scores, draw 1.} for $2$ energy ($2$ energy remains). On play, \jcard{Jayce, Brilliant Inventor} effect will trigger. Here target empowered \jcard{Platewyrm Egg} to ready it.
        \item Exhaust empowered \jcard{Platewrym Egg} to add two energy ($4$ energy remains).
        \item Repeat \ref{JT2}.(\ref{JLA}) and the previous step to add a net $1$ energy ($5$ energy remains).
        \item Use the remaining energy to play another \jcard{Sumpworks Map} for $2$ energy and a \card{Questionable Tome}{VEN 054: Empower: Exhaust / Disempower this, 1 Energy, Exhaust: Draw 1} for $3$ energy.
        \item Exhaust \jcard{Questionable Tome} to empower it. Then pass.
    \end{enumerate}
    \subsubsection{Bob Turn 3} Under our assumptions Bob scores two points, each time Bob scores, each copy of \jcard{Sumpworks Map} triggers. As a result Jayce will have drawn a total of $4$ cards this turn.
    \subsubsection{Jayce Turn 4 (Bob Score 3)}\label{JT4}
    \begin{enumerate}
        \item At the beginning phase, both copies of \jcard{Sumpworks Map} will be killed and go to the trash since they both have temporary. 
        \item Repeat \ref{JT3}.(\ref{JEPWE})-(\ref{JEPWEagain}) and exhaust the remaining rune to add a total of $5$ energy.
        \item Use $3$ energy to play another copy of \jcard{Platewyrm Egg}. Both copies of \jcard{Jayce Brilliant Inventor} will trigger, target both copies of \jcard{Platewrym Egg} to ready them both.
        \item Exhaust one of the copies of \jcard{Hextech Formula} to empower the copy of \jcard{Platewyrm Egg} that is not empowered.
        \item Exhaust both empowered copies of \jcard{Platewrym Egg} to add $4$ energy (total $6$ available) \label{EEPWE}
        \item Recycle $2$ runes to add $2$ power, use this and $2$ energy to empower \jcard{Jayce, Defender of Tomorrow}
        \item Spend $1$ energy and exhaust \jcard{Jayce, Defender of Tomorrow} to ready both empowered copies of \jcard{Platewrym Egg}. Repeat \ref{JT4}.(\ref{EEPWE}) to add $4$ energy (7 energy remains) \label{EEJL+EEPWE}
        \item Spend $1$ energy, disempower and exhaust \jcard{Questionable Tome} to draw $1$. (6 energy remains). \label{QT}
        \item Spend 3 energy and recycle a blue rune to play \card{Heimerdinger, Inventor}{OGN 111: I have all Exhaust abilities of all friendly legends, units, and gear.} (3 energy remains). 
        \item Spend $2$ energy to play the final copy of \jcard{Sumpworks map} (1 energy remains)
        \item Exhaust the second copy of \jcard{Hextech Formula} to empower \jcard{Questionable Tome} and pass.
    \end{enumerate}
    \subsubsection{Bob Turn 4} Under our assumptions Bob scores two points, each time Bob scores, \jcard{Sumpworks Map} triggers. As a result Jayce will have drawn a total of $2$ cards this turn.
    \subsubsection{Jayce Turn 5 (Bob Score 5)} \label{JT5}
    \begin{enumerate}
        \item At the beginning phase, \jcard{Sumpworks Map} will be killed and go to the trash since it has temporary. 
        \item Exhaust $4$ runes and spend the $4$ energy created to play \card{Applied Researchers}{VEN 055: Empower: 3 Energy / Empowered: Your spells cost $1$ Energy and $1$ Power less, to a minimum of $1$ Energy}. 
        \item Exhaust the remaining rune and repeat both \ref{JT4}.(\ref{EEPWE}) and \ref{JT4}.(\ref{EEJL+EEPWE}) to add a total of $8$ energy.
        \item Recycle a mind rune to add one mind power. Use this power to play gear \card{Seal of Insight}{OGN 120: Exhaust: Reaction, add $1$ mind power.} Both copies of \jcard{Jayce, Brilliant Inventor} trigger. Target the empowered copies of \jcard{Platewrym Egg} to ready them. (8 energy remains)
        \item Repeat \ref{JT4}.(\ref{EEPWE}) to add $4$ energy ($12$ energy remains).
        \item Use $10$ energy to empower the \jcard{Applied Researcher} already in base and to play and empower a second \jcard{Applied Researcher}. ($2$ energy remains).
        \item Exhaust \jcard{Heimerdinger, Inventor}, copying \jcard{Questionable Tome}'s exhaust ability, to empower \jcard{Heimerdinger}.
        \item Exhaust \jcard{Seal of Insight} to add $1$ mind power. ($2$ energy and $1$ mind power remain) \label{AddMindPower}
        \item Due to the discount of \jcard{Applied Researchers} spend $1$ energy and the mind power to play \card{Premonition}{SFD O87: Reaction/ Draw $3$}. ($1$ energy remains) \label{PlayPrem}
        \item Due to the discount of \jcard{Applied Researchers} spend the last energy to play \card{Acceleration Gate}{VEN 150: Ready up to $4$ units, gear, and/or runes} to ready both empowered copies of \jcard{Platewrym Egg}, \jcard{Heimerdinger, Inventor}, and \jcard{Seal of Insight}. \label{AG: 2PWE, L, SoI}
        \item By having \jcard{Heimerdinger, Inventor} copy \jcard{Jayce, Defender of Tomorrow}'s exhaust ability, we may repeat \ref{JT4}.(\ref{EEPWE}) and \ref{JT4}.(\ref{EEJL+EEPWE}) using \jcard{Heimerdinger, Inventor} instead to add a total of $7$ energy. \label{HeimerFix}
        \item Due to the discount of \jcard{Applied Researchers} spend $4$ energy to play \card{Progress Day}{OGN 114: Draw 4}. ($3$ energy remains) \label{PlayPD}
        \item Exhaust \jcard{Seal of Insight} and spend $1$ energy to repeat \ref{JT5}.\ref{PlayPrem} ($2$ energy remains).
        \item Due to the discount of \jcard{Applied Researchers} spend $1$ energy to play \card{Dredge Up}{VEN 049: Draw 1. / Flow: 2 Energy}. ($1$ energy remains) \label{PlayD}
        \item Play the second \jcard{Acceleration Gate}; repeat \ref{JT5}.(\ref{AG: 2PWE, L, SoI}), \ref{JT5}.(\ref{HeimerFix}), and \ref{JT5}.(\ref{AddMindPower}) to add $7$ energy and $1$ mind power. 
        \item  Repeat \ref{JT5}.(\ref{PlayPD}) and using the discount of applied researchers, spend $2$ to play \card{Consult the Past}{OGN 083: Hidden/ Reaction/ Draw 2}. (1 energy remains)
        \item With the remaining energy play the last copy of \jcard{Acceleration Gate};  repeat \ref{JT5}.(\ref{AG: 2PWE, L, SoI}) to then repeat \ref{JT4}.(\ref{EEPWE}), \ref{JT5}.(\ref{HeimerFix}), \ref{JT5}.(\ref{AddMindPower}) to add a net of $7$ energy and $1$ mind power.
        \item  Spend $2$ energy to repeat \ref{JT5}.(\ref{PlayD}) twice ($5$ energy and $1$ mind power remain).
        \item Using the discount of applied researchers, spend $2$ to play \jcard{Consult the Past} ($3$ energy and $1$ mind power remain).
        \item Repeat \ref{JT4}.(\ref{QT}) to draw the second to last card in the deck ($2$ energy and $1$ mind power remain).
        \item Exhaust a copy of \jcard{Hextech Formula} to empower the exhausted \jcard{Questionable Tome}. \label{EmpowerQT} and pass the turn.
    \end{enumerate}
        \subsubsection{Bob Turn 5} Under our assumptions Bob scores two points.
        \subsubsection{Jayce Turn 6 (Bob Score 7)}\label{JT6} During the draw step of this turn, Jayce draws the last card of the deck.
    \begin{enumerate}
        \item Exhaust $2$ runes and spend the $2$ energy created to play \card{Garbage Grabber}{OGN 099: Recycle $3$ cards from your trash, $1$ Energy, Exhaust: Draw $1$}.
        \item Exhaust a rune to add an energy.
        \item Using the exhaust ability of \jcard{Garbage Grabber} recycle the $3$ copies of \jcard{Acceleration Gate} in the trash, and pay $1$ to draw a copy of \jcard{Acceleration Gate}. \label{GG-AG}
        \item Exhaust the rest of the available runes. ($3$ energy remains).
        \item Repeat \ref{JT4}.(\ref{EEPWE}), \ref{JT5}.(\ref{HeimerFix}), \ref{JT5}.(\ref{AddMindPower}) to add a net of $7$ energy and $1$ mind power. ($10$ energy and $1$ mind power remain).
        \item To finish the preliminary set-up for building Jayce's Computer, play the drawn copy of \jcard{Acceleration Gate} to ready both empowered copies of \jcard{Platewrym Egg}, \jcard{Heimerdinger, Inventor}, and \jcard{Seal of Insight}. ($9$ energy and $1$ mind power remain).
    \end{enumerate}

    Jayce's board state at this point in the algorithm is covered in Tables \ref{Table:JayceBase}-\ref{Table:Trash}.
    At this point Jayce can generate an unbounded amount of energy and mind power by repeating the following actions:
    \begin{enumerate}
        \setcounter{enumi}{6}
        \item Repeat \ref{JT4}.(\ref{QT}) to draw another acceleration gate. ($8$ energy and $1$ mind power remains) \label{DrawAG}
        \item Repeat \ref{JT5}.(\ref{AddMindPower})  to add a mind power. ($8$ energy and $2$ mind power remain)
        \item Repeat \ref{JT4}.(\ref{EEPWE}) to add $4$ energy. ($12$ energy and $2$ mind power remain)
        \item Spend an energy and exhaust \jcard{Heimerdinger, Inventor} to ready \jcard{Questionable Tome} and \jcard{Seal of Insight} ($11$ energy and $2$ mind power remain)
        \item Exhaust a copy of \jcard{Hextech Formula} to empower \jcard{Questionable Tome}. 
        \item For $1$ energy play \jcard{Acceleration Gate} 
        to ready a copy of \jcard{Hextech Formula}, both empowered copies of \jcard{Platewrym Egg} and \jcard{Heimerdinger, Inventor}. ($3$ energy and $1$ mind power remain)
        \item Repeat the first four steps of this loop to draw the final acceleration gate and end up with a net 12 energy and 3 mind power. There are now no cards in the draw pile.
        \item For $1$ energy play \jcard{Acceleration Gate} to ready \jcard{Garbage Grabber},  both empowered copies of \jcard{Platewrym Egg} and \jcard{Heimerdinger, Inventor}. ($11$ energy and $3$ mind power remain).
        \item Repeat the second and fifth steps of this loop to add another mind power and empower \jcard{Questionable Tome} ($11$ energy and $4$ mind power remain.)
        \item Spend 1 and Exhaust \jcard{Heimerdinger, Inventor} to ready \jcard{Hextech Formula} and \jcard{Seal of Insight}. ($10$ energy and $4$ mind power remain)
        \item Repeat the second and third steps of this loop to end up with $14$ energy and $5$ mind power.
        \item Repeat \ref{JT6}.(\ref{GG-AG}).
        \item To complete the loop repeat \ref{JT5}.(\ref{AG: 2PWE, L, SoI}) (12 energy and 5 mind power remain).
    \end{enumerate}
    The above loop adds $3$ energy and $4$ mind power each time it is completed. Now that we have demonstrated an unbounded energy and mind power loop we only need to do resource accounting for body power. The following construction routines will facilitate the build of an $M\in\text{TM}(7,4)$ in \textit{Riftbound}.
    
\subsection{Construction Routines and Technical Subroutines} \label{CR&TS}
For each of the following routines, we may assume that Jayce's board state is in the configuration described in Tables \ref{Jayce1}-\ref{Table:Trash} with the necessary additions from each additional subroutine as they are presented here.

\subsubsection{An Ordering on Might}\label{Orderings} We first introduce a might ordering to assist in the construction process. Let $m$ be the function that takes a unit as an input and outputs that units current might value.  We say 
\begin{equation}
\begin{cases}
    \text{ if } \left\lfloor\frac{m(\mathsf{M}_i)}{3} \right\rfloor \equiv_2 \left\lfloor\frac{m(\mathsf{M}_j)}{3} \right\rfloor \equiv_2 0 \text{ then } \mathsf{M}_i \leq_m \mathsf{M_j} \text{ if and only if } m(\mathsf{M}_i) \geq m(\mathsf{M}_j) \\
    \text{ if } \left\lfloor\frac{m(\mathsf{M}_i)}{3} \right\rfloor \equiv_2 \left\lfloor\frac{m(\mathsf{M}_j)}{3} \right\rfloor \equiv_2 1 \text{ then } \mathsf{M}_i \leq_m \mathsf{M_j} \text{ if and only if } m(\mathsf{M}_i) \leq m(\mathsf{M}_j)\\
    \text{ if }  0=\left\lfloor\frac{m(\mathsf{M}_i)}{3} \right\rfloor \not\equiv_2 \left\lfloor\frac{m(\mathsf{M}_j)}{3}\right\rfloor = 1 \text{ then } \mathsf{M}_i <_m \mathsf{M_j}
\end{cases}
\end{equation}

The strict ordering $<_m$ is defined in the natural way. In other words, suppose $m(\mathsf{M}_n) = 3n$. Then this ordering corresponds to a standard bijection from $\NN\setminus \{0\}$ to $\ZZ$ after integer division by $3$. So, 
\begin{equation}\label{mechorder}\cdots<_m\mathsf{M}_4<_m\mathsf{M}_2<_m\mathsf{M}_1<_m\mathsf{M}_3<_m\cdots.
\end{equation}
This choice of order is similar to the method of ordering tokens used in \cite{MagicCBH20}.

\subsubsection{Drawing Other Cards}\label{DOC} Some of the cards will be played multiple times in the computation and construction of Jayce's Computer. So, we need to describe a method for recycling and drawing cards other than \jcard{Acceleration Gate}. Given our assumption on the board state, this algorithm will draw three cards, $X$ and $Y$, and a copy of \jcard{Acceleration Gate} where at most one of $X$ or $Y$ may also be a copy of \jcard{Acceleration Gate}.

\begin{enumerate}
    \item Repeat \ref{JT6}.(\ref{DrawAG})
    \item Exhaust \jcard{Heimerdinger, Inventor} to ready \jcard{Questionable Tome} and \jcard{Garbage Grabber} if needed.
    \item Exhaust \jcard{Garbage Grabber} and recycle $X$, $Y$, and \jcard{Acceleration Gate} from the trash to draw \jcard{Acceleration Gate}. \label{DOCStart}
    \item Exhaust a copy of \jcard{Hextech Formula} to empower \jcard{Questionable Tome} then repeat \ref{JT4}.(\ref{QT}) to draw one of the three recycled cards. \label{Draw}
    \item Play \jcard{Acceleration Gate} to ready \jcard{Heimerdinger, Inventor}, both copies of \jcard{Hextech Formula} and \jcard{Questionable Tome}.
    \item Repeat \ref{DOC}.(\ref{Draw}) to draw the second card recycled.
    \item Exhaust \jcard{Heimerdinger, Inventor} to ready \jcard{Hextech Formula} and \jcard{Questionable Tome}.
    \item Repeat \ref{DOC}.(\ref{Draw}) to draw the third card recycled.
    \item Exhaust \jcard{Hextech Formula} to empower \jcard{Questionable Tome}, then play \jcard{Acceleration Gate} to ready both copies of \jcard{Hextech Formula}, \jcard{Heimerdinger}, and \jcard{Garbage Grabber} if needed.\label{DOCEnd}
\end{enumerate}

At the end of one iteration, one has a copy of \jcard{Acceleration Gate}, $X$, and $Y$ in hand as desired. From here it is easy to return to the original board state, or one with more options available by playing \jcard{Acceleration Gate}. Sometimes one may want to reserve this copy of \jcard{Acceleration Gate} when the draw other cards construction routine is used within another routine. 

\subsubsection{Building the Tape and Representing Symbols}\label{T&S} Jayce will use \jcard{Mech} tokens for the tape. The state of these \jcard{Mech} tokens will also represent a symbol. A \jcard{Mech}, $\mathsf{M}$ is $0$ (or the empty symbol), a buffed \jcard{Mech} represented by $\mathsf{M}^b$ is $1$, an empowered \jcard{Mech} represented by $\mathsf{M}^e$ is $2$, and an empowered and buffed \jcard{Mech}, $\mathsf{M}^{be}$, is a $3$. We now describe how to generate an unbounded number of \jcard{Mech} tokens.
\begin{enumerate}
    \item Since our decklist has $5$ body runes and $7$ mind runes, it is possible that, of the $6$ runes available, all of them are \jcard{Mind Rune} cards. If this is the case, play \card{Catalyst of Aeons}{OGN 138: Channel $2$ runes exhausted. ...} to channel $2$ runes. The pigeonhole principle guarantees that there is a \jcard{Body Rune} available to recycle for the next step.
    \item Recycle a \jcard{Body Rune} for $1$ body power and use it to play \card{Hextech Disc}{VEN 087: Empower: Exhaust / Disempower this, $1$ Energy, Exhaust: Play a $3$ might \jcard{Mech} unit token to your base}. \label{preloop}
    
\end{enumerate} 
    The first such \jcard{Mech} token created in the following loop will be $\mathsf{M}_1$.
\begin{enumerate}[start = 3]
    \item If there was already a \jcard{Mech} token $\mathsf{M_{i-1}}$ on the board when a \jcard{Mech} $\mathsf{M_{i}}$ is created, using \ref{DOC}, we repeatedly play \card{Grim Resolve}{UNL 095: Action/ Give a friendly unit +3 might this turn. ...} $(i-1)$ times targeting $\mathsf{M}_i$ \label{mechmight}
    \item Exhaust the second copy of \jcard{Hextech Formula} to empower \jcard{Hextech Disc}. \label{T&SStart}
    \item Disempower and Exhaust \jcard{Hextech Disc} to make a \jcard{Mech} token. Repeat \ref{T&S}.(\ref{mechmight})
    \item Repeat \ref{JT6}.(\ref{DrawAG}) then play the drawn \jcard{Acceleration Gate} to ready \jcard{Hextech Disc}, both copies of \jcard{Hextech Formula} and \jcard{Questionable Tome}.
    \item Exhaust one \jcard{Hextech Formula} to empower \jcard{Questionable Tome} then repeat \ref{JT6}.(\ref{DrawAG}) to draw the last \jcard{Acceleration Gate}.\label{T&SResetStart}
    \item Play the last \jcard{Acceleration Gate} to ready \jcard{Garbage Grabber} (if needed), \jcard{Questionable Tome} and a copy of \jcard{Hextech Formula}.
    \item Repeat \ref{JT6}.(\ref{GG-AG}). Then play \jcard{Acceleration Gate} to ready \jcard{Garbage Grabber} \label{T&SEnd}
    \item Repeat steps \ref{T&S}.(\ref{T&SStart})-\ref{T&S}.(\ref{T&SEnd}) as needed.
\end{enumerate}

Before constructing the tape as above, we will play \card{Arena Bar}{OGN 124: Exhaust: Buff an exhausted friendly unit.}. With this in play, we can write a $1$ or $2$ at position $i$ as we construct the tape. Note that we may write $3$ by writing both $1$ and $2$ at position $i$. We write $1$ and $2$ in the following way:
\begin{itemize}
    \item[] 1: \begin{enumerate}
        \item Exhaust Arena Bar to buff $\mathsf{M_i}$ (note that this does not change the order on the tokens).
        \item During \ref{T&S}.(\ref{T&SEnd}) ready \jcard{Arena Bar}
    \end{enumerate}
    \item[] 2: Perform \ref{TMStates}.(\ref{RHHTFHTF})-\ref{TMStates}.(\ref{RHTrueEnd}) but instead of empowering $O_i$ empower $\mathsf{M}_i$.
\end{itemize}

\subsubsection{The Read Head}\label{RH} Jayce will use the total number of \jcard{Sprite} unit tokens in his base for the read head. If there are $n$ \jcard{Sprite} tokens on the board, the read head is positioned at $\mathsf{M}_n$. We can achieve this by playing \card{Sprite Call}{OGN 094: Play a ready 3 might sprite unit token with temporary.} as many times as needed using \ref{DOC} to draw it over and over again. We will cover moving the read head in more detail in \ref{Transitions}.


\subsubsection{Representing a member of \normalfont{TM}$(7,4)$ \textit{States}}\label{TMStates} To embed the $7$ states of $M\in\text{TM}(7,4)$ into Jayce's Computer, Jayce will use $3$ gear $\{\mathsf{G}_1,\mathsf{G}_2,\mathsf{G}_3\}$ that are unique in base and that do not have an empower or disempower ability. Each $\mathsf{G_i}$ should have an exhaust ability that only interacts with the resources available and does not draw cards. \jcard{Ancient Henge}, \jcard{Hextech Anomaly} and \jcard{Seal of Insight} are designated for this purpose in our decklist. \jcard{Ancient Henge} and \jcard{Hextech Anomaly} both have an exhaust ability regarding resource conversion and may make other routines and subroutines more efficient depending on what resources are available in the rune pool, but we do not explicitly need their use. Also Jayce will use $3$ other game objects $\{O_1,O_2,O_3\}$ that are paired with the gears and do not interact with the empowered state. Both copies of \jcard{Jayce, Brilliant Inventor} and \jcard{Garbage Grabber} work for this purpose after using \jcard{Arena Bar} to buff one of the copies of \jcard{Jayce, Brilliant Inventor} to distinguish the two. 

To represent each of the $7$ states, Jayce represents the state's index in binary using the empowered ($1$) or disempowered ($0$) state of the  $\mathsf{G}_i$. To do so, Jayce first empowers the $O_i$ by repeating the following steps for the different values of $i$.

\begin{enumerate}
    \item Exhaust \jcard{Hextech Formula} to empower the other copy of \jcard{Hextech Formula}\label{RHHTFHTF}
    \item Play \card{Profiteer}{VEN 082: When you play me, you may disempower something you control to empower a legend, unit, or gear}. to disempower \jcard{Hextech Formula} to empower $O_i$.
    \item Play \jcard{Singularity} to kill \jcard{Profiteer}.\label{KillProf}
    \item Apply \ref{DOC} with $X = $\jcard{Profiteer} and $Y= $\jcard{Singularity}.\label{DrawProf}
    \item Play \jcard{Acceleration Gate}, the one drawn from applying \ref{DOC}, to ready both copies of \jcard{Hextech Formula} and \jcard{Garbage Grabber}.\label{RHTrueEnd}
\end{enumerate}


Now, if $\mathsf{G}_i$ should be empowered, Jayce plays \jcard{Profiteer} targeting $O_i$ to empower $\mathsf{G}_i$. If $\mathsf{G}_i$ should be disempowered they target $\mathsf{G}_i$ to empower $O_i$ instead.

We also want to have substates that determine if Jayce's Computer is transitioning or writing to the tape. We say that the embedded $q_i$ state from our $M\in\text{TM}(7,4)$ corresponds to the binary representation described above where all $\mathsf{G_i}$ are \textit{ready}. When a transition requires writing to the tape or changing state, we exhaust an appropriate $\mathsf{G_i}$. More details follow in \ref{Transitions}.

\subsection{The Transition Function}\label{Transitions} In the following, the \jcard{Mech} tokens making up the tape have been arranged according to the might ordering $<_m$ as shown in \eqref{mechorder}. Suppose that the transition function $\delta$ is given and consider $\delta(s,q_l) = (s',q_j,D)$ where $s,s'\in \{0,1,2,3\}$ are symbols represented by $\{\mathsf{M},\mathsf{M}^b,\mathsf{M}^{e},\mathsf{M}^{be}\}$, $q_l,q_j\in \{q_i\}_{i\in\{1,2,\dots, 7\}}$ are states, and $D\in \{L,R\}$ is the direction the read head will move. 
\begin{enumerate}
\item To signal that a transition has started, if needed perform \ref{TMStates}.(\ref{RHHTFHTF}) to empower a copy of \jcard{Hextech Formula}. Then play \jcard{Profiteer} to disempower \jcard{Hextech Formula} and empower a \jcard{Sprite} token. 
\item Perform \ref{TMStates}.(\ref{KillProf}) and \ref{TMStates}.(\ref{DrawProf}) to kill \jcard{Profiteer} with \jcard{Singularity} and then redraw \jcard{Singularity}, \jcard{Profiteer}, and a copy of \jcard{Acceleration Gate}. \label{Profloop}
\item Play \jcard{Acceleration Gate} to put it in the trash. \label{TrashAG}
\item If $q_l \neq q_j$, exhaust \jcard{Hextech Anomaly}. 
\item If $s \neq s'$, exhaust \jcard{Ancient Henge}.
\item If \jcard{Hextech Anomaly} is exhausted:
\begin{enumerate}
    \item Write $l$ and $j$ in binary as $l_1l_2l_3$ and $j_1j_2j_3$ respectively. Note that $l_1l_2l_3$ determine the current empower-disempower states of the pair $(O_n,\mathsf{G}_n)$ for $n\in \{1,2,3\}$.
    \item For each $n\in \{1,2,3\}$ check if $l_n = j_n$. 
    \begin{itemize} 
    \item If true, then do nothing. 
    \item If false, then play \jcard{Profiteer} to flip the empower-disempower states of $O_n$ and $\mathsf{G}_n$ then do \ref{Transitions}.(\ref{Profloop}).
    \end{itemize}
\end{enumerate}
\item If \jcard{Ancient Henge} is exhausted:
\begin{enumerate}
    \item Note that $s$ is represented by $\mathsf{M}^k$ and the new symbol only changes $k$. Write $k = k_1k_2$ and $k' = k'_1k'_2$ both as elements of $\{\text{\textvisiblespace}\text{\textvisiblespace}, \text{\textvisiblespace}e, b\text{\textvisiblespace}, be\}$.
    \item If $k_1\neq k'_1$:
    \begin{itemize}
        \item If $k'_1= \text{\textvisiblespace}$: 
        \begin{enumerate}
            \item Play \jcard{Profiteer}
            \item Using the buff on $\mathsf{M}_j$ play \card{Wallop}{OGN 146: Action/ As you play this, you may spend a buff as an additional cost. If you do, ignore this spell's cost. / Ready a unit.} to ready \jcard{Profiteer}.
            \item Do \ref{Transitions}.(\ref{Profloop}) and \ref{Transitions}.(\ref{TrashAG}).
            \item Perform \ref{DOC} with $X = $ \jcard{Wallop} and $Y = $ \jcard{Acceration Gate}
            \item Do \ref{Transitions}.(\ref{TrashAG}) twice.
        \end{enumerate}
        \item If $k'_1 = b$:
        \begin{enumerate}
            \item Exhaust \jcard{Arena Bar} to buff $\mathsf{M}_j$.
            \item Perform \ref{JT6}.(\ref{DrawAG}) then play the drawn \jcard{Acceleration Gate} to ready \jcard{Arena Bar}, both copies of \jcard{Hextech Formula} if needed and \jcard{Questionable Tome}.
            \item Perform \ref{T&S}.(\ref{T&SResetStart})-\ref{T&S}.(\ref{T&SEnd})
        \end{enumerate}
    \end{itemize}
    \item If $k_2\neq k_2'$
    \begin{itemize}
        \item If $k_2' = \text{\textvisiblespace}$ 
        \begin{enumerate}
            \item Play \jcard{Profiteer} to disempower $\mathsf{M}_j$ and empower \jcard{Profiteer}. 
            \item Do \ref{Transitions}.(\ref{Profloop}) and \ref{Transitions}.(\ref{TrashAG}).
        \end{enumerate}
        \item If $k_2' = e$
        \begin{enumerate}
            \item If needed do \ref{TMStates}.(\ref{RHHTFHTF}).
            \item Play \jcard{Profiteer} to disempower \jcard{Hextech Formula} and empower $\mathsf{M}_j$.
            \item Do \ref{Transitions}.(\ref{Profloop}) and \ref{Transitions}.(\ref{TrashAG}).
        \end{enumerate}
    \end{itemize}
\end{enumerate}
    \item Note the number $r$ of \jcard{Sprite} tokens in base.
    \begin{itemize}
            \item If $r\equiv_2 1$ and $D = R$, or , if $r\equiv_2 0$ and $D = L$: 
                \begin{enumerate}
                    \item Play \jcard{Sprite Call} 
                    \item Use \ref{DOC} with $X = $\jcard{Sprite Call} and $Y = $\jcard{Acceleration Gate}.
                    \item Play \jcard{Sprite Call} and do \ref{Transitions}.(\ref{TrashAG}). twice.
                \end{enumerate}
            \item If $r > 1$, $r\equiv_2 1$ and $D = L$, or if $r>2$, $r\equiv_2 0$ and $D = R$:
                \begin{enumerate}
                    \item Play \jcard{Singularity} to kill the empowered \jcard{Sprite} token and another \jcard{Sprite} token
                    \item Use \ref{DOC} with $X = $\jcard{Singularity} and $Y = $\jcard{Acceleration Gate}.
                    \item Do \ref{Transitions}.(\ref{TrashAG}). twice.
                \end{enumerate}
            \item If $r=1$ and $D = L$:
                \begin{enumerate}
                    \item Play \jcard{Sprite Call} 
                    \item Use \ref{DOC} with $X = $\jcard{Sprite Call} and $Y = $\jcard{Acceleration Gate}.
                    \item Do \ref{Transitions}.(\ref{TrashAG}). twice.
                \end{enumerate}
            \item If $r = 2$ and $D = R$:
                \begin{enumerate}
                    \item Play \jcard{Singularity} to kill the empowered \jcard{Sprite} token.
                    \item Use \ref{DOC} with $X = $\jcard{Singularity} and $Y = $\jcard{Acceleration Gate}.
                    \item Do \ref{Transitions}.(\ref{TrashAG}). twice.
                \end{enumerate}
    \end{itemize}
    \item If there is an empowered \jcard{Sprite} token. Play \jcard{Profiteer} to disempower the empowered \jcard{Sprite} token and empower \jcard{Profiteer}, then do \ref{Transitions}.(\ref{Profloop}).
    \item Play \jcard{Acceleration Gate} to ready \jcard{Hextech Anomaly}, \jcard{Ancient Henge}, and also \jcard{Seal of Insight} if needed.
\end{enumerate}
\vspace{-.175cm}At this point, the transition has finished ending in an embedded state of a Turing machine $M\in\text{TM}(7,4)$ since all $\mathsf{G}_i$ are ready. 

\subsubsection{Halting}\label{halting}
When implementing UTM$(7,4)$, we may encode halting by making Jayce win the game. When the machine reads the halt symbol, Jayce will play \card{Renata Glasc, Mastermind}{SFD 088: ... / $4$ energy $4$ mind power, Exhaust: Score $1$ point/ ... / Use my abilities only while I'm at a battlefield.}. Then Jayce will ready and exhaust \jcard{Heimerdinger, Inventor} repeatedly to use the scoring exhaust ability of \jcard{Renata Glasc, Mastermind} until Jayce wins. There are several methods available for repeatedly readying \jcard{Heimerdinger, Inventor} in the algorithms here! Alternatively if we are looking for an accept or reject state instead of the halt symbol of the UTM, we may force Jayce to win in the same way when accepting. When rejecting, we may force Jayce to lose by continuing to attempt to draw from an empty deck by a slight alteration to \ref{DOC}. 

\section{Jayce's Strategy and Conclusions}\label{strategy}
Given a TM input pair $(M,w)\in$TM$(7,4)\times\{0,1,2,3\}^*$ Jayce proceeds as follows:
\begin{enumerate}
    \item Construct UTM$(7,4)$ as detailed in the algorithms above. If at any point the construction is interrupted by an opponent's action that would change might of, kill, or remove cards in your base, concede.
    \item Encode $w$ onto the tape using Section \ref{T&S}.
    \item Simulate $M$ on $w$ using Section \ref{Transitions}.
    \item If $M$ halts on $w$, use Section \ref{halting} to win or lose accordingly.
\end{enumerate}

This strategy is well-defined as a UTM$(7,4)$ exists by \cite[Section 6.]{RS96}. This strategy shows that \textit{Riftbound} is Turing complete. Further, we have shown that it is undecidable that a strategy in \textit{Riftbound} has a winning turn, proving Theorem \ref{MainTheorem}. Setting up the machine in this strategy is quite involved from the point of view of the number of cards played. We can be more precise about this by stating the following proposition.

\begin{proposition}
    For any $M\in \text{TM}(7,4)$ Jayce's Computer has string-card complexity $O(k^2)$.
\end{proposition}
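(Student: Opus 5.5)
The plan is to count cards played while initializing a string of length $k$ on the tape, using the tape-building loop of Section \ref{T&S}. We split the card count into three parts: a fixed overhead, a per-cell constant cost, and the might-adjustment cost. Showing the third part is quadratic and the others are at most linear gives the bound $N k^2$.

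First, we bound the overhead. The setup from Subsections \ref{JCSetup} through Turn 6 uses a fixed number of cards, independent of $k$. So do the one-time plays before the tape loop: \jcard{Catalyst of Aeons}, \jcard{Hextech Disc}, and \jcard{Arena Bar}, together with the state encoding of Section \ref{TMStates}. We also note that each completion of the resource loop of \ref{JT6} plays a bounded number of cards (a constant number of \jcard{Acceleration Gate} plays plus \jcard{Garbage Grabber} activations) and nets $3$ energy and $4$ mind power. Every later routine consumes only $O(1)$ energy and power per card played, so resource regeneration multiplies the card count by at most a constant.

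Second, we count the cost per cell. Creating $\mathsf{M}_i$ via steps \ref{T&S}.(\ref{T&SStart})--\ref{T&S}.(\ref{T&SEnd}) plays a bounded number of \jcard{Acceleration Gate} copies. Writing the symbol at position $i$ is also $O(1)$: writing $1$ uses one \jcard{Arena Bar} exhaust, and writing $2$ uses one pass of \ref{TMStates}.(\ref{RHHTFHTF})--\ref{TMStates}.(\ref{RHTrueEnd}), which calls \ref{DOC} once and so plays a constant number of cards. Summed over the $k$ cells, this part is $O(k)$.

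Third, we count the might adjustments in step \ref{T&S}.(\ref{mechmight}). This is the dominant cost. To place $\mathsf{M}_i$ correctly in the order $<_m$ of \eqref{mechorder}, we play \jcard{Grim Resolve} $i-1$ times. Each play requires one application of \ref{DOC} to redraw it, and \ref{DOC} plays a constant number $c$ of cards (at most three \jcard{Acceleration Gate} plays plus the \jcard{Questionable Tome} and \jcard{Garbage Grabber} activations). The total for this part is therefore
\[
\sum_{i=1}^{k} c\,(i-1) = \frac{c\,k(k-1)}{2},
\]
which is $O(k^2)$.

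Adding the three parts gives at most $C_0 + C_1 k + C_2 k^2 \le N k^2$ for a suitable $N>1$ and all $k\ge 1$. The bound is independent of $M$, because the transition table plays no role in initializing the tape.

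The main obstacle is the bookkeeping needed to confirm that each invoked subroutine really plays a constant number of cards. This is most delicate for \ref{DOC} when it is nested inside the \jcard{Grim Resolve} loop. The issue is that the number of \jcard{Acceleration Gate} copies and the readiness of \jcard{Heimerdinger, Inventor} and \jcard{Garbage Grabber} must return to a fixed configuration after each iteration, so that no hidden cost grows with $i$. I would check this by showing that after step \ref{DOC}.(\ref{DOCEnd}) the board matches the configuration in Tables \ref{Jayce1}--\ref{Table:Trash}, up to the added tokens and cards in hand.
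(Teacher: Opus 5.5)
Your proposal is correct and follows essentially the same argument as the paper. It has a constant setup overhead, and the energy/mind-power loop turns resource costs into a constant-factor card overhead. Every per-cell step except \ref{T&S}.(\ref{mechmight}) costs $O(1)$, and the $i-1$ plays of \jcard{Grim Resolve} (each needing a pass of \ref{DOC}) give $O(i)$ cards for $\mathsf{M}_i$, hence $O(k^2)$ in total. Your plan to check that \ref{DOC} returns the board to a fixed configuration is something the paper only asserts, so it makes the same argument more careful rather than different.
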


\begin{proof}
     Prior to \ref{T&S}.\eqref{preloop}, the cards played require some constant $C$ of resources. Suppose the process involved to set up $k$ slots on the tape require $f(k)$ cards. Then since the energy and mind power loop generates $3$ energy and $4$ mind power every $3$ cards played ($3$ copies of \jcard{Acceleration Gate} are played in each repetition of the energy and mind power loop), the total number of cards required accounting for energy costs is bounded above by $O(f(k))+\frac{C}{3}$. So without loss of generality we may restrict our attention to the total number of cards played in the process. The careful reader will note that \ref{T&S}.\eqref{mechmight} is the most card-intensive step in the set up of the tape, the repeated play of \jcard{Grim Resolve} requires \ref{DOC}. So for the $i^{th}$ \jcard{Mech}, $O(i)$ cards are played. Taking into account the other steps in \ref{T&S} are $O(1)$ for the $i^{th}$ \jcard{Mech}, a quick calculation shows that the string-card complexity is 
    $$
        O\left( \sum_{i=1}^{k} O(i)\right) = O\left(O\left(\frac{k(k+1)}{2}\right)\right) = O(k^2)
    $$
\end{proof}

\section{Problems}\label{C&P}

Theorem \ref{MainTheorem} places \textit{Riftbound} in a similar category as \textit{Magic} in terms of its complexity, however it is important to note that our result is weaker as the strategy for constructing Jayce's Computer does not force the moves of the opponent. As mentioned earlier, there are cards that do force player action and so we arrive at a natural question:
\begin{question}
    Does there exist a construction of a UTM in \textit{Riftbound} in which the actions of all players are forced?
\end{question}
With the current card pool we suspect the answer to this question is no, but we may be wrong and there may be a clever solution that we are unaware of. Alternatively, as Jayce's Computer cannot handle interaction from the opponent, a weaker but still natural question in this same vein would be:
\begin{question}
    Does there exist a construction of a UTM in \textit{Riftbound} that prevents the opponent from interrupting its construction and computation?
\end{question}
There are legal cards in the card pool (at the time of writing) that restrict or stop interaction like \card{Lilting Lullaby}{UNL 190} and \card{Brynhir Thundersong}{OGN 026} respectively. As \jcard{Brynhir Thundersong} is in the fury domain, there are known infinite move combos here that use little resources that may be modified to also allow for the generation of infinite resources perhaps with the inclusion of \card{Jhin, Murderous Artist}{UNL 022}, but we have not thought about this problem in further detail.

Here we would also like to note that Jayce's Computer is not the only machine that we constructed in the development of this project, it was however, the easiest to keep track of as it only requires interaction with one zone of the game --- Jayce's base. There is actually a valid sequence of game states in \textit{Riftbound} which simulate any given Turing machine. For this more general case, one player plays the \card{Ivern, Green Father}{UNL 195} legend, and the other plays \card{Miss Fortune, Bounty Hunter}{OGN 267} legend with the following cards included in their decks.

 \subsubsection{Ivern's Decklist Inclusions}
    \begin{itemize}
        \item \jcard{Vanguard Armory}
        \item \jcard{Tianna Crownguard}
        \item \jcard{Lee Sin Astetic}
        \item \jcard{Rally the Troops}
        \item \jcard{Needlessly Large Yordle}
        \item \jcard{Imperial Decree}
        \item \jcard{Find Your Center}
        \item\jcard{Forge the Future}
        \item \jcard{Zaun Punk}
        \item \jcard{Charm}
    \end{itemize}
    \subsubsection{Miss Fortune's Decklist Inclusions}
    \begin{itemize}
        \item \jcard{Akshan}
        \item \jcard{Possesion}
        \item \jcard{Ride the Wind}
        \item \jcard{Flash}
        \item \jcard{Star-Crossed}
        \item \jcard{Baron Nashor}
        \item \jcard{Disposal Order}
        \item \jcard{Pack of Wonders}
        \item \jcard{Bewtiching Spirit}
    \end{itemize}

However, this more general construction is weaker than Jayce's Computer because it requires the game-state-directed strategies of the players to conspire to build and run the given machine and uses all zones of the game. This construction also has a larger turn complexity than Jayce's Computer for deciders. We leave this cooperative construction of an arbitrary Turing machine in \textit{Riftbound} as an exercise for the reader, and perhaps, their friend. 

\begin{exercise}
Define two game-state-directed strategies for players playing \jcard{Ivern, Green Father} and \jcard{Miss Fortune, Bounty Hunter} whose interaction constructs a given Turing machine $M$ and simulates a given input $w$ on that machine such that the Miss Fortune wins when the machine accepts and the Ivern wins when the machine rejects.
\end{exercise}

It would also be interesting to improve the string-card complexity of Jayce's Computer while retaining its turn complexity.

\begin{question}
    Does there exist a game-directed-strategy in \textit{Riftbound} that constructs a given Turing machine $M$ where strings of length $k$ have string-card complexity $O(k^n)$ for $n<2$ and where computation has turn complexity $O(1)$ if $M$ is a decider?
\end{question}

Further it we suspect that Jayce's Computer may be implemented faster than as described above:

\begin{question}
    Does there exist a game-direceted-strategy in \textit{Riftbound} for Jayce's Computer where the construction of the Turing machine occurs on Jayce's fifth turn or sooner?
\end{question}

\section{Appendix: How to play \textit{Riftbound: League of Legends Trading Card Game}}\label{Append}
 We will only cover rules and ideas relevant to constructing Jayce's Computer in the game. For more in depth information on rules of \textit{Riftbound} see the core rules \cite{CR} and the tournament rules \cite{TR}.\footnote{This project was completed just after the release of the \textit{Vendetta} set of \textit{Riftbound}, for future readers, more up-to-date rules can be found at \url{https://playriftbound.com/en-us/rules-hub/}.}  We summarize the card types in \textit{Riftbound} here.
    \subsection{\textit{Card Types}}

        \subsubsection{Legend} Cards that define the overall deck. \jcard{Jayce, Defender of Tomorrow} is one such card. These cards usually have additional effects or activated abilities. A deck's chosen champion unit must match the identity of the deck's legend.
        \subsubsection{Unit} Essentially, units are cards with an additional might indicator. Units may be played to any location under that player's control. Unless otherwise stated or restricted, units enter exhausted and all have intrinsic abilities ``Exhaust me: move me from base to a battlefield" and ``Exhaust me: move me from my battlefield to base" unless otherwise stated. A units' might functions both as its hit point value and its attack power. When a unit takes damage that meets or exceeds its might value, that unit is killed and goes to the trash. 
        \subsubsection{Spell} Cards that are played to add to or start a chain. Spell cards can only be played on a players turn unless they have the action or reaction keywords or some other alternative play window indicator where certain other rules apply. Actions can start chains, reactions can start or add to chains. 
        \subsubsection{Gear} Cards that are played to base that either have persistent effects or have abilities that can be activated. Gear enter ready unless otherwise stated.  
        \subsubsection{Battlefield} Cards that designate locations to conquer and hold throughout the game. Some battlefields may change or restrict aspects of the game.
        \subsubsection{Rune} Cards in the rune deck. These can be exhausted to add energy. They may also be recycled back into the rune deck to generate power of the domain stated on the rune card. 

    \subsection{Deck Construction and Set Up}
    \textit{Riftbound} is played with decks built around a legend card. The main deck consists of $40$ gear, spell, and unit cards, one of which is a chosen champion matching the legend card for the overall deck. The rune deck consists of $12$ runes matching the domains of the legend card. A player also includes a number of unique battlefields  depending on the game mode. These battlefields are kept separately from the main and rune decks.

    Each player sets their legend and chosen champion aside, chooses a battlefield for the game, and shuffles both the other $39$ cards of their main deck and their rune deck separately. Each player starts by drawing a hand of $4$ cards from their main deck and may mulligan up to $2$ of the $4$ cards by recycling them to the bottom of the main deck before drawing to replace them.

    \subsection{Tokens} Tokens are game objects generated by effects and abilities of other cards. Tokens are not played, but instead created and destroyed. They can only exist at base or at battlefields (if unit tokens), or be a location themselves (if battlefield tokens). If they are banished, killed, or would be put in trash they are instead destroyed. At the time of writing, there are only status indicator, unit, gear, and battlefield tokens in the game. Each token is considered a game object.

    \subsection{Exhaust and Ready} The exhausted state of a card is denoted by a card on the board horizontally, the ready state of a card is denoted by a card on the board upright. Various abilities and effects may ready and exhaust units, some abilities require that a card is exhausted and some may require that a card is ready. Units, Gear, and Legends can all be exhausted or readied. 

    \subsection{Energy and Power} The main resource pool in \textit{Riftbound}. Energy is domain-less and is generated by exhausting runes of any domain. Power on the other hand has an associated domain and is generated by recycling runes of that domain. There are also cards that have abilities that generate resources. Some power generation abilities generate a domain-less power that can be used for any domain power cost. 

    \subsection{Abilities and Chains} Abilities generally have a cost or trigger associated to them. When the cost is paid or the trigger condition is met, abilities get added to a chain. When a chain has been added to, their is a window for players to play cards with the reaction keyword to add to the chain. When a chain resolves, items on the chain are resolved in order starting with the most recently added chain item.

    \subsection{Showdowns} Essentially showdowns start whenever a player moves to an open battlefield or a battlefield that they do not control. We only assume that the opponent starts open showdowns in the first few turns of the Jayce's Computer construction, and Jayce does not interact with battlefields for the entirety of the game so there are many nuances here that the interested reader should investigate should they be interested in learning more about \textit{Riftbound} in general.

    \subsection{Buff and Empower States} The buff and empowered states are binary states of cards in Riftbound with few exceptions.\footnote{\card{Lee Sin, Ascetic}{OGN 078} does not have a buff limit and  \card{Kayle, Justified}{VEN 134} can be empowered $3$ times.} It is important to reiterate here that tokens are game objects. Therefore, they may be empowered by \cite[Rule 123 and Rule 441]{CR}

    \subsection{Turn Structure} Turns in \textit{Riftbound} start with the following four initial steps.
    \begin{enumerate}
        \item Awaken: All cards under the player's control (unless otherwise stated) will ready if they are exhausted during this phase.
        \item Beginning: At the start of this phase, temporary effects trigger. This is also the phase where holding and any holding effects happen.
        \item Channel: During this phase $2$ (unless otherwise restricted or permitted by persistent effects that may be present) runes are ``channeled" ready from the rune deck. In simpler terms, the player flips the first two cards of the rune deck face up onto the board in the ready position. 
        \item Draw: The player draws a card for turn. 
    \end{enumerate}
    After the initial ABCD, \textit{Riftbound} has a considerably open ended turn structure with an action phase where the player may play as much as they are able to given they have enough resources available. A turn ends when a player passes the turn to their opponent.

    \subsection{Points, Point Tempo, and Winning the Game}\label{PointTempo} In \textit{Riftbound} the goal is to score $8$ points. Although there are battlefields that change this win condition, we do not consider them in this short paper and many of them have been banned in standard play. 
    
    Points are usually scored by conquering and/or holding battlefields. One conquers by winning a showdown at a battlefield, and holds by having control of a battlefield at the end of their beginning phase. One can score at a battlefield at most once per turn. Unless otherwise stipulated by card effects, scoring grants a point to the player. A player may also score if their opponent attempts to draw when their deck is empty. Some units have abilities that allow their controller to score points.

    Point tempo describes the progression of points for a player over turns in the game. The most common point tempos are $(0,1,3,5,7,8)$ and $(0,2,4,6,8)$ where the $0$ is potentially a place holder for several $0$'s. There are several other point tempos in the game, but for the purposes of this short paper we only consider a point tempo of (0,1,3,5,7,8). When Jayce goes first, Jayce's Computer would finish construction when the opponent has $7$ points in the above implementation.

\section{Appendix: Decklist and Tables}\label{D&T}
In the tables that follow, the main decklist for Jayce's Computer is separated into each location cards find themselves when Jayce is ready to perform the construction. The utility of each card after set-up is also detailed if it is possible to draw and play that card again. For those cards in Jayce's base, the exhaust/empowered states of the cards are also specified. We used $7$ mind runes and $5$ body runes and we do not specify battlefields since the deck only interacts with Jayce's base.
\clearpage
    \begin{table}[hbt!]
    \caption{Jayce's base when machine construction can\label{Jayce1} start.}\label{Table:JayceBase}
    \begin{center}
    \begin{tabular}{|c|c|c|c|c|}
    \hline
        Card & Type & Exhaust & Empowered &Utility\\
    \hline
        Applied Researchers & U & No & Yes & Discounts spells to a minimum of $1$.\\
    \hline
        Applied Researchers & U & No & Yes & Discounts spells to a minimum of $1$.\\
    \hline
        Garbage Grabber & G & Yes & No & Recycles trash. Card draw.\\
    \hline
        Heimerdinger, Inventor & U & No & Yes & Readys gear, $q_{\text{accept}}$ resolution.\\
    \hline
        Hextech Formula & G & No & No & Empowers gear.\\
    \hline
        Hextech Formula & G & No & No & Empowers gear.\\
    \hline
        Jayce, Brilliant Inventor & U & No & No & Machine State\\
    \hline
        Jayce, Brilliant Inventor & U &  No & No &Machine State\\
    \hline
        Jayce, Defender of Tomorrow & L & No & Yes & Readys Gear.\\
    \hline
        Platewrym Egg & G & No & Yes & Generates Energy.\\
    \hline
        Platewrym Egg & G & No & Yes & Generates Energy.\\
    \hline
        Questionable Tome & G & No & Yes & Card Draw. $q_{\text{reject}}$ resolution.\\
    \hline
        Seal of Insight& G & No & No & Generates Mind Power. Machine State.\\
    \hline
    \end{tabular}
    \end{center}
    \end{table}
    \begin{table}[h!]
    \caption{Jayce's deck when machine construction can start}\label{Table:Deck}
    \begin{center}
    \begin{tabular}{|c|c|c|}
    \hline
        Card & Type& Utility\\
    \hline
        Acceleration Gate& S& Perpetuates Turn  \\
    \hline
        Acceleration Gate& S& Perpetuates Turn \\
    \hline
    \end{tabular}
    \end{center}
    \end{table}
    \begin{table}[h!]
    \caption{Jayce's cards in hand when machine construction can start}\label{Table:Hand}
    \begin{center}
    \begin{tabular}{|c|c|c|}
    \hline
        Card & Type& Utility\\
    \hline
        Ancient Henge& G& Machine state.\\
    \hline
        Arena Bar& G& Writing to tape. \\
    \hline
        Catalyst of Aeons& S& Body rune search. \\
    \hline
        Garbage Grabber & G & Recycles trash. Card draw.  \\
    \hline
        Grim Resolve& S& Tape Generation.\\
    \hline
        Hextech Anomaly& G& Machine state.\\
    \hline
        Hextech Disc& G& Tape generation.\\
    \hline
        Sprite Call& S& Tape generation.\\
    \hline
        Mobilize& S& Body rune search.\\
    \hline
        Profiteer& U& State changes. Writing to tape. Moving read head.\\
    \hline
        Renata Glasc, Masterminding& U& $q_{\text{accept}}$ resolution.\\
    \hline
        Singularity& S& Unit replay.\\
    \hline
        Wallop & S& Writing to tape.\\
    \hline
    \end{tabular}
    \end{center}
    \end{table}
    \begin{table}[h!]
    \caption{Jayce's trash when machine construction can start}\label{Table:Trash}
    \begin{center}
    \begin{tabular}{|c|c|c|}
    \hline
        Card & Type & Utility\\
    \hline
        Acceleration Gate& S & Perpetuates Turn \\
    \hline
        Consult the Past& S & Card draw, $q_{\text{reject}}$ resolution. \\
    \hline
        Consult the Past& S & Card draw, $q_{\text{reject}}$ resolution.\\
    \hline
        Dredge Up& S & $q_{\text{reject}}$ resolution. \\
    \hline
        Dredge Up& S & $q_{\text{reject}}$ resolution. \\
    \hline
        Dredge Up& S & $q_{\text{reject}}$ resolution. \\
    \hline
        Premonition& S & $q_{\text{reject}}$ resolution. \\
    \hline
        Premonition& S & $q_{\text{reject}}$ resolution. \\
    \hline
        Progress Day& S & $q_{\text{reject}}$ resolution. \\
    \hline
        Progress Day& S & $q_{\text{reject}}$ resolution. \\
    \hline
        Sumpworks Map& G & Not used after setup \\
    \hline
        Sumpworks Map& G & Not used after setup \\
    \hline
    Sumpworks Map& G & Not used after setup \\
    \hline
    \end{tabular}
    \end{center}
    \end{table}
    
\newpage{}
\bibliographystyle{amsplain}
\bibliography{TCG.bib}

\end{document}